\newtheorem{theorem}{Theorem}
\newtheorem{lemma}[theorem]{Lemma}
\newcommand{\msg}[4]{$\langle$\texttt{#1}, $#2$, $#3$, $#4$$\rangle$}
\newcommand{\preprepare}[3]{\msg{PRE-PREPARE}{#1}{#2}{#3}}
\newcommand{\prepare}[3]{\msg{PREPARE}{#1}{#2}{#3}}
\newcommand{\commit}[3]{\msg{COMMIT}{#1}{#2}{#3}}
\newcommand{\roundchange}[4]{$\langle$\texttt{ROUND-CHANGE}, $#1$, $#2$, $#3$, $#4$$\rangle$}
\newcommand{\quorum}{$\lfloor\frac{n+f}{2}\rfloor+1$ }
\newcommand{\belowquorum}{$\lfloor\frac{n+f}{2}\rfloor$ }
\begin{document}

\title{The Istanbul BFT Consensus Algorithm}
\author{
	Henrique Moniz\\
	\textit{Quorum Engineering}\\
}

\maketitle

\abstract{
This paper presents IBFT, a simple and elegant Byzantine fault-tolerant consensus algorithm that is used to implement state machine replication in the \emph{Quorum} blockchain. IBFT assumes a partially synchronous communication model, where safety does not depend on any timing assumptions and only liveness depends on periods of synchrony. The algorithm is deterministic, leader-based, and optimally resilient - tolerating $f$ faulty processes out of $n$, where $n \geq 3f+1$. During periods of good communication, IBFT achieves termination in three message delays and has $O(n^2)$ total communication complexity.
}

\section{Introduction}

In this paper, we present \emph{Istanbul BFT} (IBFT), a Byzantine fault-tolerant (BFT) consensus algorithm that is used for implementing state-machine replication in the \emph{Quorum} blockchain. Quorum is an open source permissioned blockchain platform. It is based on Ethereum and designed for enterprise applications.

IBFT was initially proposed informally in EIP-650~\cite{lin17istanbul}. That first proposal had safety issues where two correct processes could decide different values. A revision addressed these safety issues but introduced liveness issues where some executions could lead to a deadlock~\cite{chuan18tolerate, saltini19correctness}. This paper offers a precise and \emph{correct} description of the algorithm along with correctness proofs, solving both the liveness and safety issues of earlier versions. 
Saltini et al. provide an alternative solution to the correctness issues in the initial IBFT proposal by applying the PBFT protocol~\cite{castro99byzantine} to a blockchain application~\cite{saltini19ibft20}.

IBFT belongs to a class of BFT algorithms that assume a partially synchronous communication model~\cite{dwork88partial}. Under this model messages can be arbitrarily delayed, but the system is characterized by also having periods of good communication in which messages are timely delivered. The IBFT algorithm is deterministic, leader-based, and optimally resilient - tolerating $f$ faulty processes out of $n$, where $n \geq 3f+1$ \cite{pease80agreement}. During periods of good communication, IBFT achieves termination in three message delays and has a communication complexity of $O(n^2)$.


IBFT follows from the line of work started with PBFT~\cite{castro99byzantine}, which was the first practical algorithm designed for this model. More recently, a number of consensus algorithms have been proposed under the same class with the aim of being used specifically in blockchain systems~\cite{buchman18latest, buterin17casper, gueta19sbft, yin19hotstuff}.

Table~\ref{table:performance} shows the communication complexity and latency to reach a decision for different algorithms under this class. The communication complexity is the asymptotic upper bound on the total number of bits exchanged by the algorithm. The latency is the number of message delays incurred by a correct process until termination. Both metrics pertain for a period of synchrony in which messages are timely delivered.

\begin{table}
	\label{table:performance}
	\centering
	\rowcolors{3}{gray!10}{white}
	\begin{tabular}{l||cc||c}
		& \multicolumn{2}{c||}{\textbf{Communication Complexity}} & \textbf{Latency}\\ 
		& Normal Case & View Change & Message Delays \\
		\hline
		DLS~\cite{dwork88partial} & $O(n^4)$ & $O(n^4)$ & $O(n)$ \\
		PBFT~\cite{castro99byzantine} & $O(n^2)$ & $O(n^3)$ & 3 \\
		Zyzzyva~\cite{kotla07zyzzyva} & $O(n)$ & $O(n^3)$ & 1 / $3^*$ \\
		Spinning~\cite{veronese09spin} & $O(n^2)$ & $O(n^3)$ & 3 \\
		SBFT~\cite{gueta19sbft} & $O(cn)^\dagger$ & $O(n^2)$ & 5 / 7 \\
		HotStuff~\cite{yin19hotstuff} & $O(n)$ & $O(n)$ & 8 \\
		IBFT (Sec. \ref{section:algorithm}) & $O(n^2)$ & $O(n^2)$ & 3 \\
	\end{tabular}
	\caption{\small Performance of related algorithms when communication is timely. Message delays for dual-mode protocols are shown as $x$ / $y$ where $x$ is for the optimistic environment and $y$ otherwise.}
	\raggedright	
	\small{
		$^*$Zyzzyva requires, for the slow path, waiting for the maximum network delay $\Delta$ in addition to the 3 message delays. Additionally, its fast path has known correctness issues~\cite{abraham17revisiting}. \\
		$^\dagger$$O(fn)$ if applying the recommended heuristic. See Section~\ref{section:related-work}} for context.
\end{table}

IBFT minimizes worst-case latency (i.e., the number of messages delays to termination without optimistic assumptions about the environment). It achieves termination in three message delays - matching PBFT, Zyzzyva, and Spinning - while improving on the communication complexity of these algorithms by exchanging a quadratic number of messages. HotStuff achieves linear communication complexity but it does so at a high cost in latency.


The remainder of the paper is organized as follows. Section~\ref{section:related-work} discusses the related work. Section~\ref{section:system-model} defines the system model. Section~\ref{section:algorithm} describes the IBFT algorithm in detail.
Section~\ref{section:correctness} proves the correctness of IBFT. 
Finally, Section~\ref{section:conclusion} concludes the paper. 

\section{Related Work}
\label{section:related-work}

The problems of consensus and state machine replication (SMR) are closely associated with one another. Consensus requires processes in a distributed system to reach agreement on some value~\cite{turek92consensus}. SMR requires agreement on a total order of commands~\cite{lamport78clocks, schneider90tutorial}. When consensus is solvable, so is SMR. As such, consensus is often used as a building block to implement SMR. 

The problem of a distributed system reaching agreement in the presence of Byzantine process failures was first devised by Pease et al.~\cite{lamport82byzantine, pease80agreement}. They also propose solutions for synchronous systems, where there is a known bound on the message transmission delays and the relative speeds of processes.

Of more practical utility are solutions to the problem of consensus in asynchronous systems, where there are no timing assumptions. Fischer et al., however, proved there there is no deterministic solution to consensus in asynchronous system where a single process is allowed to fail~\cite{fischer85impossibility}. There is an abundant body of research dedicated to circumventing this impossibility result using different techniques. The most notable examples being partial synchrony~\cite{dolev87minimal,dwork88partial}, failure detectors~\cite{aguilera00failure,chandra96detectors}, and randomization~\cite{benor83free, chor89randomization, rabin83randomized}.

In this paper, we are concerned with implementing Byzantine fault-tolerant state machine replication in a partially synchronous system. We thus restrict our comparison to protocols that, like IBFT, assume a partially synchronous model, tolerate Byzantine process failures, and are optimally resilient (i.e., $n \geq 3f+1$).

The partially synchronous model was introduced by Dwork et al.~\cite{dwork88partial}. Along with it, they also proposed a Byzantine fault-tolerant consensus algorithm (DLS), which, although inefficient, proved that the problem has a solution in partially synchronous systems.

The PBFT algorithm was the first to provide a correct solution for state machine replication with Byzantine faults in a partially synchronous system where safety does not depend on timing assumptions~\cite{castro99byzantine}. This work was seminal in that it inspired a long line of algorithms that explore the design space within the same partially synchronous model. 

The Zyzzyva algorithm introduced the idea of using speculative execution to improve performance~\cite{kotla07zyzzyva}. Replicas optimistically adopt the order proposed by the primary and delegate to the clients the detection of inconsistencies, which help replicas resolve their state to one that is consistent with a total ordering of requests. Clement et al. later showed that this approach suffers from significant performance problems if even a single malicious client is present in the system~\cite{clement09making}. As a solution, they propose a new algorithm - Aardvark - that makes more robust design decisions at a cost of best-case performance. More recently, Abraham et al. also demonstrated safety violations in the optimistic execution of Zyzzyva~\cite{abraham17revisiting}.

Spinning was the first algorithm to use a rotating leader replica~\cite{veronese09spin}, a concept later applied by many blockchain-motivated BFT algorithms. The leader replica is changed after every request execution instead of only when it is suspected to have failed.

More recently, we have seen a new wave of protocols that are motivated by their application to blockchain systems. We highlight Tendermint, SBFT, and HotStuff.

Tendermint is a protocol whose main novelty is that it does not have a separate round (i.e., view) change algorithm~\cite{buchman18latest}. Replicas change to a new round $r+1$ as part of the normal operation by reaching a decision on round $r$. Similarly to Spinning, this allows for leader rotation as part of the normal operation and not just when a leader is suspected to be faulty. The main drawback of Tendermint is that even if a round has timely communication and an honest leader, it does not guarantee that a decision will be reached within that round. This is because if a correct process is locked on a block $b$ that is not the one being proposed, then the algorithm needs to keep advancing the view until it reaches one where the leader proposes $b$. This results in a total communication complexity of $O(n^3)$ and latency of $O(n)$.

SBFT is a dual-mode protocol, employing a faster optimistic protocol - inspired by Zyzzyva - when there are no faulty replicas and the system is synchronous, and a slower fallback protocol - similar to PBFT - otherwise~\cite{gueta19sbft}. SBFT makes use of the concept of a \emph{collector}. During a communication round, each replica, instead of broadcasting its message, sends it to a designated replica that aggregates the messages from all replicas into a single message and broadcasts it. Messages are signed using threshold signatures, which allow for the aggregated message to have a constant size. Since a single slow or failed collector would be sufficient to make the system switch to the fallback, slower protocol, SBFT allows the optimistic protocol to tolerate a parameterized number $c$ of slow or failed replicas out of $n = 3f+2c+1$. Thus, for any $c > 0$, the algorithm fails to achieve optimal resiliency. SBFT has $O(cn)$ communication complexity during the normal case and $O(n^2)$ complexity during view changes. If $c$ is a constant, then this results in linear complexity during the normal case. It is unlikely, however, for $c$ to remain a constant value as a system scales. The authors recommend $c \leq f/8$ as a good heuristic, in which case the communication complexity would be $O(fn)$.

HotStuff is another protocol that employs the concept of a collector combined with threshold signatures to reduce communication complexity~\cite{yin19hotstuff}. Unlike SBT, however, it only uses the primary replica as the collector. Like Tendermint, HotStuff does not employ a separate view change protocol. Instead, the view is advanced as part of the normal execution. This allows it to achieve $O(n$) communication. The trade-off is a higher number of message delays to reach a decision.

\section{System Model}
\label{section:system-model}
The system is composed by a known set of $n$ processes $\Pi = \{p_1, p_2, ..., p_n\}$. A process that follows the algorithm is said to be \emph{correct}. Otherwise, it is said to be \emph{faulty}. A faulty process can behave in an arbitrary (i.e., Byzantine) way, including sending purposely wrong messages with the intent to obviate the correct execution of the algorithm. The number of faulty processes is constrained to $f$, such that $n \geq 3f+1$

We assume a partially synchronous system, where there is an unknown upper bound $\Delta$ on execution and communication delays that holds after an unknown period of time called \emph{global stabilization time} (or GST, for short)~\cite{dwork88partial}.

Processes communicate by sending messages over a network. Any message sent before GST can be arbitrarily delayed or lost by the network. Reaching GST, we assume the following: a message sent by a correct process at some time $t$, such that $t \geq GST$, is guaranteed to be delivered by all correct processes by time $t + \Delta$. For convenience sake, each process has access to a broadcast primitive that sends a message to every process in $\Pi$, including itself.





\section{The IBFT Algorithm}
\label{section:algorithm}

We present the IBFT pseudocode in algorithms \ref{alg:ibft-variables} to \ref{alg:ibft-validation}. The pseudocode depicts how a consensus instance identified by $\lambda$ is executed by a correct process $p_i$. Algorithm \ref{alg:ibft-variables} has the constants, state variables, and the \textproc{Start} procedure, which starts a consensus instance $\lambda$ on $p_i$. Algorithm \ref{alg:ibft-normal} describes the normal case operation of IBFT, which happens during periods where the leader is correct and messages are timely delivered. Algorithm \ref{alg:ibft-rc} details how round changes are performed, for when a leader is suspected to be faulty. Finally, Algorithm \ref{alg:ibft-validation} has the predicates used for message justification, which guarantees correctness during round changes by ensuring that only safe values are proposed. 

\begin{algorithm}
\caption{IBFT pseudocode for process $p_i$: constants, state variables, and ancillary procedures}\label{alg:euclid}
\label{alg:ibft-variables}
\begin{algorithmic}[1]
\footnotesize

\algdef{SN}[constants]{Constants}{EndConstants}
{\textbf{constants:}}

\algdef{SN}[variables]{Variables}{EndVariables}
{\textbf{state variables:}}

\algdef{SN}[timer]{Timer}{EndTimer}
{\textbf{timer:}}

\Statex
\Constants\label{constants}
\State $p_i$ \Comment{The identifier of the process}
\EndConstants\label{end-constants}

\Statex
\Variables\label{variables}
\State $\lambda_i$ \Comment{The identifier of the consensus instance}
\State $r_i$ \Comment{The current round}
\State $pr_i$ \Comment{The round at which the process has prepared}
\State $pv_i$ \Comment{The value for which the process has prepared}
\State $inputValue_i$ \Comment{The value passed as input to this instance}
\EndVariables\label{end-variables}

\Statex
\Timer\label{timer}
\State $timer_i$
\EndTimer\label{end-timer}

\Statex
\Procedure{Start}{$\lambda$, $value$}\label{procedure:start}
	\State $\lambda_i \gets \lambda$
	\State $r_i \gets 1$
	\State $pr_i \gets \bot$
	\State $pv_i \gets \bot$
	\State $inputValue_i \gets value$
	\If{\textsc{leader}($h_i$, $r_i$) $= p_i$}
		\State broadcast \preprepare{\lambda_i}{r_i}{inputValue_i} message
	\EndIf
	\State set $timer_i$ to \texttt{running} and expire after $t(r_i)$
\EndProcedure\label{end-procedure:start}


\end{algorithmic}
\end{algorithm}

\begin{algorithm}
\caption{IBFT pseudocode for process $p_i$: normal case operation}
\label{alg:ibft-normal}
\begin{algorithmic}[1]
\footnotesize

\algdef{SN}[upon]{Upon}{EndUpon}
	[1][]{\textbf{upon} #1 \textbf{do}}

\Statex
\Upon [receiving a valid \preprepare{\lambda_i}{r_i}{value} message $m$ from \textproc{leader($\lambda_i$, $round$)} such that \textproc{JustifyPrePrepare($m$)}]\label{upon:normal-preprepare}
	\State set $timer_i$ to \texttt{running} and expire after $t(r_i)$
	\State broadcast \prepare{\lambda_i}{r_i}{value}
\EndUpon\label{end-upon:normal-preprepare}

\Statex
\Upon[receiving a quorum of valid \prepare{\lambda_i}{r_i}{value} messages]\label{upon:normal-prepare}
	\State $pr_i$ $\gets$ $r_i$
	\State $pv_i$ $\gets$ $value$
	\State broadcast \commit{\lambda_i}{r_i}{value}
\EndUpon\label{end-upon:normal-prepare}

\Statex
\Upon[receiving a quorum $Q_{commit}$ of valid \commit{\lambda_i}{round}{value} messages]\label{upon:normal-commit}
	\State set $timer_i$ to \texttt{stopped}
	\State \textproc{Decide($\lambda_i, value, Q_{commit}$)}
\EndUpon\label{end-upon:normal-commit}

\end{algorithmic}
\end{algorithm}

\begin{algorithm}
\caption{IBFT pseudocode for process $p_i$: round changes}
\label{alg:ibft-rc}
\begin{algorithmic}[1]
\footnotesize

\algdef{SN}[upon]{Upon}{EndUpon}
	[1][]{\textbf{upon} #1 \textbf{do}}

\Statex
\Upon[$timer_i$ is \texttt{expired}]\label{upon:rc-timer-expired}
	\State $r_i \gets r_i + 1$
	\State set $timer_i$ to \texttt{running} and expire after $t(r_i)$
	\State broadcast \roundchange{\lambda_i}{r_i}{pr_i}{pv_i}
\EndUpon\label{end-upon:rc-timer-expired}

\Statex
\Upon[receiving a set $F_{rc}$ of $f+1$ valid \roundchange{\lambda_i}{r_j}{_-}{_-} messages such that $\forall$\roundchange{\lambda_i}{r_j}{_-}{_-} $\in F_{rc}: r_j > r_i$]\label{upon:rc-set-f}
	\State let \roundchange{h_i}{r_{min}}{_-}{_-} $\in F_{rc}$ such that:\\
		\qquad\qquad $\forall$\roundchange{\lambda_i}{r_j}{_-}{_-} $\in F_{rc}: r_{min} \leq r_j$
	\State $r_i \gets r_{min}$
	\State set $timer_i$ to \texttt{running} and expire after $t(r_i)$	
	\State broadcast \roundchange{\lambda_i}{r_i}{pr_i}{pv_i}
\EndUpon\label{end-upon:rc-set-f}


\Statex
\Upon[receiving a quorum $Q_{rc}$ of valid \roundchange{\lambda_i}{r_i}{_-}{_-} messages such that $\textproc{leader($\lambda_i$, $r_i$)} = p_i$ $\land$ \textproc{JustifyRoundChange($Q_{rc}$)}]\label{upon:rc-broadcast-preprepare}
	\If{$\textproc{HighestPrepared($Q_{rc}$)} \neq \bot$}\label{choose-value}
		\State let $v$ such that $(_-, v) = \textproc{HighestPrepared($Q_{rc}$)})$
	\Else
		\State let $v$ such that $v = inputValue_i$
	\EndIf\label{end-choose-value}
	\State broadcast \preprepare{\lambda_i}{r_i}{v}
\EndUpon\label{end-upon:rc-broadcast-preprepare}

\Statex
\Statex {\scriptsize $\triangleright$ We can omit this if we assume some mechanism external to the consensus algorithm that ensures synchronization of decided values.}
\Upon[receiving a valid \roundchange{\lambda_i}{_-}{_-}{_-} message from  $p_j$ $\land$ $p_i$ has decided by calling \textproc{Decide($\lambda_i, _-, Q_{commit}$)}]\label{upon:rc-synchronization}
	\State send $Q_{commit}$ to process $p_j$
\EndUpon\label{end-upon:rc-synchronization}

\end{algorithmic}
\end{algorithm}

\begin{algorithm}
\caption{IBFT pseudocode for process $p_i$: message justification}
\label{alg:ibft-validation}
\begin{algorithmic}[1]
\footnotesize

\algdef{SN}[predicate]{Predicate}{EndPredicate}
	[1][]{\textbf{predicate} #1}	

\Statex
\Predicate{\textproc{JustifyRoundChange($Q_{rc}$)}}\label{predicate:justifyroundchange}
	\State \textbf{return}
	\Statex \qquad\quad $\forall$\roundchange{\lambda_i}{r_i}{pr_j}{pv_j} $\in Q_{rc}: pr_j = \bot \land pv_j = \bot$
	\Statex \qquad\quad $\lor$ received a quorum of valid \prepare{\lambda_i}{pr}{pv} messages such that:
	\Statex \qquad\qquad\quad $(pr,pv) = \textproc{HighestPrepared}(Q_{rc})$
\EndPredicate\label{end-predicate:justifyroundchange}

\Statex
\Predicate{\textproc{JustifyPrePrepare(\preprepare{\lambda_i}{round}{value})}}\label{predicate:justifypreprepare}
	\State \textbf{return}
	\Statex \qquad\quad $round = 1$
	\Statex \qquad\quad $\lor$ received a quorum $Q_{rc}$ of valid \roundchange{\lambda_i}{round}{pr_j}{pv_j} messages
	\Statex \qquad\quad\quad such that:
	\Statex \qquad\qquad\quad $\forall$\roundchange{\lambda_i}{round}{pr_j}{pv_j} $\in Q_{rc}: pr_j = \bot \land pr_j = \bot$
	\Statex \qquad\qquad\quad $\lor$ received a quorum of valid \prepare{\lambda_i}{pr}{value} messages such that:
	\Statex \qquad\qquad\qquad\quad $(pr,value) =$ \textproc{HighestPrepared($Q_{rc}$)}
\EndPredicate\label{end-predicate:justifypreprepare}

\Statex
\Statex {\scriptsize $\triangleright$ Helper function that returns a tuple $(pr, pv)$ where $pr$ and $pv$ are, respectively, the prepared round and the prepared value of the \texttt{ROUND-CHANGE} message in $Q_{rc}$ with the highest prepared round}
\Function{HighestPrepared}{$Q_{rc}$}\label{function:highestprepared}
	\State \textbf{return} $(pr, pv)$ such that:
	\Statex \qquad\quad $\exists$\roundchange{\lambda_i}{round}{pr}{pv} $\in Q_{rc} :$	
	\Statex \qquad\qquad\ \ \ $\forall$\roundchange{\lambda_i}{round}{pr_j}{pv_j} $\in Q_{rc} : pr_j = \bot \lor pr \geq pr_j$
\EndFunction\label{end-function:highestprepared}

\end{algorithmic}
\end{algorithm}


\subsection{Preliminaries}
\label{section:preliminaries}

The IBFT algorithm solves the consensus problem, where all correct processes need to decide on some common value. More formally, each  instance of IBFT guarantees the following properties:

\begin{description}
	\item[Agreement.] If a correct process decides some value $v$, then no correct process decides a value $v'$ such that $v' \neq v$.
	\item[Validity.] Given an externally provided predicate $\beta$, if a correct process decides some value $v$, then $\beta(v)$ is true.
	\item[Termination.]	Every correct process eventually decides.
\end{description}

Our validation condition deserves further explanation. It uses the notion of \emph{external validity}, originally proposed by Cachin et al. \cite{cachin01secure}. The application calling the algorithm provides an arbitrary predicate $\beta$ whose purpose is to ensure that the decided value is acceptable within the context of the application. For instance, a blockchain implementation might want to ensure that the decided value is a block containing legitimate transactions.

Each instance $\lambda$ of the algorithm proceeds in rounds. During each round, one of the processes acts has a leader that tries to drive the execution to a common decision by proposing a value. During a \emph{good} round, where communication is timely and the leader is not faulty (i.e., after GST), the algorithm guarantees that all correct processes will reach a decision. There is a function \textsc{leader}($\lambda$, $round$) that identifies the leader. This function can be any deterministic mapping from $\lambda$ and $round$ to the identifier of a process as long as it allows $f+1$ processes to eventually assume the leader role.

\paragraph{Messages.}
Messages are represented as tuples enclosed in angle brackets. There are four types of messages: \texttt{PRE-PREPARE}, \texttt{PREPARE}, \texttt{COMMIT}, and \texttt{ROUND-CHANGE}. The first three types are of the form $\langle$\texttt{message-type}, $\lambda$, $r$, $value$$\rangle$ - where $\lambda$ is the consensus instance, $r$ is the round, and $value$ is the proposal value - and comprise the normal case operation of the algorithm. A \texttt{ROUND-CHANGE} message is of the form \roundchange{\lambda}{r}{pr}{pv} - where $\lambda$ is the consensus instance, $r$ is the round, $pr$ is the prepared round, and $pv$ is the prepared value - and is used to ensure progress when the current leader is suspected to have failed or the communication is not timely. 

\paragraph{State.}
The algorithm state is composed of five variables: the consensus instance $\lambda_i$, the round $r_i$, the prepared round $pr_i$, and the prepared value $pv_i$, and the input value $inputValue_i$.

The variable $\lambda_i$ identifies the instance of the consensus algorithm being executed. It is set upon a call to the \textsc{Start} procedure and it never changes throughout the execution. When the algorithm is used to implement state machine replication, the instances are numbered in a total order that determines the execution of commands. In a blockchain system, $\lambda_i$ can correspond to the block number.

The variable $r_i$ identifies the round in which process $p_i$ is currently on and it starts at 1.

The prepared round $pr_i$ and prepared value $pv_i$ variables are, respectively, the highest round and the corresponding value for which $p_i$ has \emph{prepared}. During the execution of a consensus instance $\lambda$, we say that a process $p_i$ has prepared for a round $r$ and a value $v$ if it receives a \emph{quorum of valid messages}\footnote{We explain in the validation paragraph below what this entails.} of the form \prepare{\lambda}{round}{value}. These variables are initialized with a default value $\bot$, which means that $p_i$ has not prepared yet. This mechanism is essential for the safety of the algorithm and we explain how it works in Section~\ref{section:justification}.

Finally, $inputValue_i$ is simply the value passed as input to process $p_i$, which is saved in this variable.

\paragraph{Timer.} In addition to the state variables, each correct process $p_i$ also maintains a timer represented by $timer_i$, which is used to trigger a round change when the algorithm does not sufficiently progress. The timer can be in one of two states: \texttt{running} or \texttt{expired}. When set to \texttt{running}, it is also set a time $t(r_i)$, which is an exponential function of the round number $r_i$, after which the state changes to \texttt{expired}.

\paragraph{Upon rules.}
The main logic of the algorithm is expressed as a set of event-driven \textsl{upon} rules that are triggered when some condition is met. These are found exclusively in Algorithms \ref{alg:ibft-normal} and \ref{alg:ibft-rc}. While not explicit in the pseudocode, we impose the restriction that, within an instance of the algorithm, each upon rule is triggered at most once for any round $r_i$. The only exception is the last rule on Algorithm \ref{alg:ibft-rc} (line~\ref{upon:rc-synchronization}), which can be triggered any number of times.

A condition necessary to trigger many upon rules is receiving a \emph{quorum of valid messages} that match a certain pattern. We say that a process has received a quorum of valid messages if it has received \emph{valid} messages from $\lfloor \frac{n+f}{2} \rfloor + 1$ different processes. For instance, the upon rule in line 3 of Algorithm \ref{alg:ibft-normal} is triggered when process $p_i$ receives $\lfloor \frac{n+f}{2} \rfloor + 1$ valid messages from different processes that have type \texttt{PREPARE}, match instance $\lambda_i$ and round $r_i$, and have the same $value$. 

\paragraph{Validation.}
A correct process only accepts a message if it considers it to be \emph{valid}. To be valid, a message must carry some proof of integrity and authentication of its sender such as a digital signature. The external validity predicate $\beta$ must also be true for the value carried by the message. Furthermore, a \roundchange{\lambda}{r}{pr}{pv} message to be valid needs for the prepared round to be smaller than the round, i.e., $pr < r$.


\subsection{Normal case operation.}

We now explain how the algorithm works during normal case operation, i.e., in some round where communication is timely and the leader is correct. 

For any correct process $p_i$, an execution of an instance $\lambda$ of the algorithm begins with a call to the \textproc{Start} procedure (Algorithm \ref{alg:ibft-variables}, line~\ref{procedure:start}), which takes as input parameters the instance identifier $\lambda$ and an input \emph{value}. The procedure then initializes the state variables and if $p_i$ is the leader for the current round, it broadcasts a \texttt{PRE-PREPARE} message proposing $inputValue_i$.

The remainder of the normal case operation is expressed in Algorithm \ref{alg:ibft-normal}. Upon receiving a valid \texttt{PRE-PREPARE} message from the leader for instance $\lambda_i$ and current round $r_i$ that carries a \emph{justified}\footnote{This is explained in Section~\ref{section:justification}.} value, a process $p_i$ restarts the timer and broadcasts a \texttt{PREPARE} message (lines \ref{upon:normal-preprepare}-\ref{end-upon:normal-preprepare}).

Upon receiving a quorum of valid \texttt{PREPARE} messages for instance $\lambda_i$ and current round $r_i$ carrying the same $value$, a process $p_i$ updates its prepared round $pr_i$ and prepared value $pv_i$ variables to match the value of the received messages (lines \ref{upon:normal-prepare}-\ref{end-upon:normal-prepare}). We now say that $p_i$ has \emph{prepared} for round $r_i$ and value $value$. It then broadcasts a \texttt{COMMIT} message carrying $value$. 

Finally, upon receiving a quorum of valid \texttt{COMMIT} messages for instance $\lambda_i$ with the same $round$ and $value$, a process $p_i$ decides by calling an externally provided \textproc{Decide} function (lines \ref{upon:normal-commit}-\ref{end-upon:normal-commit}).

\subsection{Round changes}
\label{section:round-changes}
The previous section explains how the algorithm works under good conditions. The algorithm, however, must be able to tolerate arbitrary periods where communication is untimely or the leader is faulty. Round changes are how the algorithm ensures liveness by allowing different processes to assume the role of a leader. The pseudocode for round changes is expressed in Algorithm \ref{alg:ibft-rc}.

A round change is primarily triggered by the timer, which is available in each process for each consensus instance and represented by $timer_i$. If the algorithm has not made sufficient progress for a process $p_i$ to decide during some round $r_i$, then the timer will eventually expire (Algorithm \ref{alg:ibft-rc}, line \ref{upon:rc-timer-expired}). When this happens, $p_i$ advances to round $r_i+1$ and broadcasts a \texttt{ROUND-CHANGE} message. This message carries the values of the prepared round $pr_i$ and prepared value $pv_i$ variables, which will be used by the new leader to select a value to propose in a \texttt{PRE-PREPARE} message for $r_i + 1$. We explain how this value is selected in Section \ref{section:justification}, which discusses message justification.

The upon rule at line \ref{upon:rc-set-f} is also used for liveness. It is not strictly required in the sense that the algorithm would still be live without it, but it helps ensuring that processes do not wait too long before advancing to a new round. Whenever a process $p_i$ receives a set of $f+1$ valid \texttt{ROUND-CHANGE} messages with any round number higher than its current round $r_i$, it advances to the smallest round within that set and broadcasts a \texttt{ROUND-CHANGE} message.




The upon rule at line \ref{upon:rc-broadcast-preprepare} starts the normal operation of the new round by having the leader broadcast a \texttt{PRE-PREPARE} message. From this point on, the algorithm resumes as specified by Algorithm \ref{alg:ibft-normal}.

Finally, the upon rule at line \ref{upon:rc-synchronization} ensures that any process $p_j$ can catch up to a decision already made by process $p_i$ by having $p_i$ send $p_j$ a quorum of \texttt{COMMIT} messages for $\lambda_i$.

\subsection{Message Justification: Safety Across Rounds}
\label{section:justification}
While round changes ensure liveness, we also need to ensure that the proposal value chosen by the leader of a new round is safe. For this, we rely on the following property:

\begin{itemize}

	\item If some correct process could have decided a value $v$ at some round $r < r'$, then $v$ must be the value proposed in a \texttt{PRE-PREPARE} message for round $r'$.		
\end{itemize}

To select a proposal value that guarantees this property, the leader process, upon receiving a quorum $Q_{rc}$ of \texttt{ROUND-CHANGE} messages for round $r'$ (Algorithm \ref{alg:ibft-rc}, line \ref{upon:rc-broadcast-preprepare}), follows the logic expressed in lines \ref{choose-value}-\ref{end-choose-value}. If there is any message in $Q_{rc}$ with prepared round and prepared value not equal to $\bot$, then the leader chooses $v$ as the prepared value $pv$ of the message with the highest prepared round $pr$ among the messages in $Q_{rc}$. Otherwise, if all the messages in $Q_{rc}$ have prepared round and prepared value equal to $\bot$, then the leader chooses the input value that was passed to the \textproc{Start} function and saved in the $inputValue_i$ variable.

Choosing the proposal value in this way ensures safety across rounds. This is because if a correct process decides some value $v$ during a round $r$, then $v$ will be chosen as the proposal value for round $r+1$. To see why, say that a correct process decides $v$ during round $r$. Then, at least $f+1$ correct processes must have prepared for $v$ during round $r'$. This implies that any quorum of \texttt{ROUND-CHANGE} messages will have at least one message with prepared round and prepared value set to $r$ and $v$, respectively, and $v$ will be chosen as the proposal value if the processes follow the algorithm. Since $v$ is the proposal value for round $r+1$, it is not possible for a correct process to decide a different value during round $r+1$. A form of inductive reasoning applies for any subsequent rounds.

While choosing the proposal value in this way ensures safety across rounds, it is possible for a leader that is faulty to deviate from algorithm and send a \texttt{PRE-PREPARE} message with any arbitrary value. Likewise, any faulty process can send a \texttt{ROUND-CHANGE} message with an incorrect prepared round and prepared value to try to deceive a correct leader into choosing an unsafe proposal.

As such, to demonstrate that they are correctly constructed, these messages need a \emph{justification}. A justification is a set of other messages that prove that the message being justified is congruent with the algorithm. For example, a message \roundchange{\lambda}{r}{pr}{pv} sent by a process $p_i$, which  states that $p_i$ prepared for round $pr$ and value $pv$, is only congruent with the algorithm if it exists a quorum of \prepare{\lambda}{pr}{pv} messages. A justification is piggybacked on the message being justified, but it is not part of the message itself.

 The predicates \textproc{JustifyPrePrepare} and \textproc{JustifyRoundChange} in Algorithm \ref{alg:ibft-validation} express the justification logic, which we further explain below.

\paragraph{ROUND-CHANGE justification.}
Messages of the type \texttt{ROUND-CHANGE} are justified together as a set. A correct process $p_i$ considers a quorum $Q_{rc}$ of \texttt{ROUND-CHANGE} messages to be justified if one of the following conditions is true:

\begin{description}
\item[J1] All of the messages in $Q_{rc}$ have prepared round and prepared value equal to $\bot$.
\item[J2] The justification has a quorum of valid \prepare{\lambda_i}{pr}{pv} messages such that \roundchange{\lambda_i}{r}{pr}{pv} is the message with the highest prepared round different than $\bot$ in $Q_{rc}$.
\end{description}

\paragraph{PRE-PREPARE justification.}

A \texttt{PRE-PREPARE} message for round $1$ does not need justification. For any round $r > 1$, a \preprepare{\lambda}{r}{value} message is justified if its justification has a quorum $Q_{rc}$ of \texttt{ROUND-CHANGE} messages such that conditions \textbf{J1} or \textbf{J2} are true, and if \textbf{J2} is true, then $value=pv$.

\section{Correctness}
\label{section:correctness}

In this section we prove the correctness of IBFT. It is organized in three subsections, one for each property of the algorithm as specified in Section~\ref{section:preliminaries} - agreement, validity, and termination. For improved readability, proofs that require longer or more complex arguments are written using the structured proof format proposed by Lamport \cite{lamport12proof}.

\subsection{Agreement}

The proof of the agreement is structured as Lemmas \ref{lemma:prepare-same-round} and \ref{lemma:no-preprepare-following-rounds} and Theorem \ref{theorem:agreement}. Lemma \ref{lemma:prepare-same-round} is used as support to prove agreement during the same round, and Lemma \ref{lemma:no-preprepare-following-rounds} to prove agreement across different rounds. Theorem \ref{theorem:agreement} uses both lemmas to conclude the reasoning.

\paragraph{Agreement during the same round.}

\begin{lemma}
\label{lemma:prepare-same-round}
If some correct process prepares for a value $v$ and round $r$, then no correct process prepares for a value $v'$ and round $r$ such that $v' \neq v$.
\end{lemma}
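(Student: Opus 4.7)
The plan is a standard quorum-intersection argument combined with the stated restriction that each \textbf{upon} rule fires at most once per round for a correct process.

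First, I would set up a proof by contradiction: suppose correct processes $p_i$ and $p_j$ both prepare for round $r$, but with values $v$ and $v'$ respectively, where $v \neq v'$. By the \textbf{upon} rule at line~\ref{upon:normal-prepare} of Algorithm~\ref{alg:ibft-normal}, $p_i$ must have received a quorum $Q$ of valid \texttt{PREPARE} messages of the form \prepare{\lambda}{r}{v}, and $p_j$ must have received a quorum $Q'$ of valid \texttt{PREPARE} messages of the form \prepare{\lambda}{r}{v'}. Each quorum has size $\lfloor (n+f)/2 \rfloor + 1$, and the messages in each come from that many distinct processes.

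Next, I would apply a counting argument to $Q$ and $Q'$. The senders form two subsets of $\Pi$ (of size $n$), each of cardinality at least $\lfloor (n+f)/2 \rfloor + 1$, so by inclusion-exclusion their intersection has size at least
\[
2\bigl(\lfloor (n+f)/2 \rfloor + 1\bigr) - n \; \geq \; f + 1.
\]
Since at most $f$ processes are faulty, at least one correct process $p_k$ lies in the intersection, meaning $p_k$ sent both a \prepare{\lambda}{r}{v} message and a \prepare{\lambda}{r}{v'} message.

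Finally, I would derive the contradiction by appealing to the fact (noted explicitly in the paragraph on \textsl{upon} rules) that each \textbf{upon} rule is triggered at most once per round. A correct process only broadcasts a \texttt{PREPARE} message for round $r$ either in response to a valid \texttt{PRE-PREPARE} at line~\ref{upon:normal-preprepare} of Algorithm~\ref{alg:ibft-normal}; this rule fires at most once, and each firing emits a single value determined by the \texttt{PRE-PREPARE}. Therefore $p_k$ can send at most one \texttt{PREPARE} message for $(\lambda, r, \cdot)$, contradicting the existence of distinct messages \prepare{\lambda}{r}{v} and \prepare{\lambda}{r}{v'} sent by $p_k$. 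The main subtlety is the quorum-intersection bookkeeping with the floor function; the rest is essentially mechanical once the at-most-once rule is invoked.
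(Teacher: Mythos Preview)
Your proof is correct and is essentially the same quorum argument as the paper's, just phrased via quorum intersection and contradiction rather than the paper's direct count of how many \texttt{PREPARE} messages for $v'$ can exist. Both hinge on the same fact---that a correct process broadcasts at most one \texttt{PREPARE} per round---so the two presentations are equivalent.
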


For this proof we assume that a correct process has prepared for value $v$ at round $r$ and we demonstrate that no correct process can prepare for a different value $v'$ at the same round because there is no quorum of \texttt{PREPARE} messages for $v'$.

\begin{proof}
	\beforePfSpace{1ex, 1ex}
	\afterPfSpace{1ex, 2ex}
	\interStepSpace{1ex}	
	\pflongnumbers
	
	\step{1}{
		A correct process received \quorum valid \prepare{\lambda}{r}{v} messages.
	}
	\begin{proof}
		\pf\ This follows from the assumption. To prepare for value $v$ and round $r$, a correct process must receive \quorum valid \prepare{\lambda}{r}{v} messages.
	\end{proof}

	\step{2}{
		$f+1$ correct processes broadcasted a \prepare{\lambda}{r}{v} message.
	}
	\begin{proof}
		\pf\ This follows from \stepref{1}. If a correct process received \quorum valid \prepare{\lambda}{r}{v} messages, then $f+1$ of those messages must be been broadcasted by correct processes.
	\end{proof}

	\step{3}{
		For any value $v' \neq v$, at most \belowquorum valid \prepare{h}{r}{v'} messages were broadcasted.
	}
	\begin{proof}
		\pf\ We have established in \stepref{2} that $f+1$ correct processes broadcasted a \prepare{\lambda}{r}{v} message. It follows that at most \belowquorum processes could have broadcasted a \prepare{\lambda}{r}{v'} message.
	\end{proof}

	\qedstep
	\begin{proof}
		\pf\ This follows from \stepref{3} since to prepare for a value $v'$ and round $r$, a correct process requires \quorum valid \prepare{\lambda}{r}{v'} messages.	
	\end{proof}
\end{proof}

\paragraph{Agreement across rounds.}
Lemma \ref{lemma:prepare-same-round} showed that no two correct processes can prepare for different values on the same round. To prove agreement across rounds we will rely on the \ref{lemma:no-preprepare-following-rounds}, which states that that once $f+1$ correct processes prepare for the same round $r$ and value $v$, then a \texttt{PRE-PREPARE} message for the subsequent round $r+1$ must propose $v$ to be justified.

\begin{lemma}
\label{lemma:no-preprepare-following-rounds}
If $f+1$ correct processes prepare for a value $v$ and round $r$, then, for any \preprepare{\lambda}{r'}{v'} message $m$ such that $r' > r$ and $v' \neq v$, \textproc{JustifyPreprepare($m$)} must be false.
\end{lemma}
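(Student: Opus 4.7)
The plan is to argue by strong induction on $r'$, with base case $r' = r+1$. Suppose, toward a contradiction, that some \preprepare{\lambda}{r'}{v'} message $m$ with $v' \neq v$ satisfies \textproc{JustifyPrePrepare}. Then its justification must exhibit a quorum $Q_{rc}$ of valid \texttt{ROUND-CHANGE} messages for round $r'$ satisfying either condition \textbf{J1} or condition \textbf{J2} from Section~\ref{section:justification}.

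First I would invoke Byzantine quorum intersection: the $f+1$ correct processes that prepared for $(r, v)$ and the $\lfloor (n+f)/2 \rfloor + 1$ distinct senders of $Q_{rc}$ cannot be disjoint subsets of $\Pi$, so $Q_{rc}$ contains a \texttt{ROUND-CHANGE} message from at least one correct process $p_k$ that prepared for $(r, v)$. Since $pr_k$ only moves to values $\geq r$ once $p_k$ has prepared in round $r$, its message carries a prepared round $pr_k \geq r$, in particular different from $\bot$. This alone rules out condition \textbf{J1}. Condition \textbf{J2} must therefore hold, so \textproc{JustifyPrePrepare} demands $v' = pv^*$ together with a quorum of valid \prepare{\lambda}{pr^*}{v'} messages, where $(pr^*, pv^*) = \textproc{HighestPrepared}(Q_{rc})$. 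Message validity forces $pr^* < r'$, and by the definition of HighestPrepared we have $pr^* \geq pr_k \geq r$.

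I would then split on whether $pr^* = r$ or $r < pr^* < r'$. When $pr^* = r$, Lemma~\ref{lemma:prepare-same-round} applied to the correct process that prepared for $(r, v)$ immediately forbids the existence of any quorum of \prepare{\lambda}{r}{v'} messages, yielding a contradiction. When $r < pr^* < r'$, the purported quorum of \prepare{\lambda}{pr^*}{v'} messages contains more than $f$ distinct senders, so at least one correct process sent \prepare{\lambda}{pr^*}{v'}; a correct process only does so after receiving a valid \preprepare{\lambda}{pr^*}{v'} that it considers justified. Since $r < pr^* < r'$, the inductive hypothesis states that no such \texttt{PRE-PREPARE} can be justified, again a contradiction.

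The main obstacle is the inductive step rather than any quantitative estimate. The base case $r' = r+1$ collapses cleanly to the first sub-case above, since every $pr$ carried by a valid \texttt{ROUND-CHANGE} for round $r+1$ is at most $r$, so Lemma~\ref{lemma:prepare-same-round} alone closes it. The real content of the induction is in chaining ``quorum of PREPARE $\Rightarrow$ some correct sender of PREPARE $\Rightarrow$ justified PRE-PREPARE'' at an intermediate round $pr^*$ that is only identified once $Q_{rc}$ is inspected. A small care point is keeping the pair $(pr^*, pv^*)$ consistent between the counting argument and the \textproc{JustifyPrePrepare} predicate so that the same witness is used in both places.
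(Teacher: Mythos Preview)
Your proposal is correct and follows essentially the same approach as the paper: strong induction on $r'$, quorum intersection with the $f+1$ prepared correct processes to rule out \textbf{J1} and to force $r \le pr^* < r'$, then Lemma~\ref{lemma:prepare-same-round} for the sub-case $pr^* = r$ and the inductive hypothesis (via ``quorum of \texttt{PREPARE} $\Rightarrow$ some correct sender $\Rightarrow$ justified \texttt{PRE-PREPARE}'') for $r < pr^* < r'$. The only cosmetic difference is that the paper writes out the base case $r' = r+1$ separately, whereas you fold it into the general argument by observing it collapses to the $pr^* = r$ sub-case.
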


We prove this lemma by induction on $r'$. In the base case we prove the statement for $r' = r + 1$. In the induction step we prove for $r' > r+1$ and assume that \textproc{JustfifyPreprepare}($m$) is false for any $r''$ such that $r+1 \leq r'' < r'$. 
\\\\
For both cases, the argument is constructed by demonstrating that \textproc{JustifyPreprepare} must be false for any \preprepare{\lambda}{r'}{v'} message $m$. According to the definition of the predicate, \textproc{JustifyPreprepare($m$)} is true if there is a quorum $Q_{rc}$ of \quorum valid \roundchange{\lambda}{r+1}{pr_j}{pv_j} messages such that one of the following two conditions is true, which correspond to J1 and J2 from Section \ref{section:justification}:

\begin{itemize}[leftmargin=6ex]
	\item[(a)] $\forall$\roundchange{\lambda_i}{r+1}{pr_j}{pv_j} $\in Q_{rc}: pr_j = \bot \land pr_j = \bot$
	\item[(b)] There are \quorum valid \prepare{\lambda}{r'}{v'} messages such that $v' \neq v$ and $(r',v') =$ \textproc{HighestPrepared($Q_{rc}$)}
\end{itemize}

We show, both for the base case and for the induction step, that both (a) and (b) are false.
\\
\begin{proof}
  \beforePfSpace{1ex, 1ex}
  \afterPfSpace{1ex, 2ex}
  \interStepSpace{1ex}
  \pflongnumbers

  \step{base}{
    \textproc{Base Case:} $r' = r+1$
    
    \begin{proof}
      \step{base-a}{
        \prove{Condition (a) is false.}
        \begin{proof}
		  \step{base-a-1}{
			Any quorum $Q_{rc}$ includes some message from a correct process that prepared for round $r$ and value $v$.
          }
          \begin{proof}
            \pf\ A quorum $Q_{rc}$ by definition has \quorum messages. Since, by assumption, $f+1$ correct processes prepared for round $r$ and value $v$, it follows that $Q_{rc}$ must include some message from those $f+1$ processes.
          \end{proof}

          \qedstep
          \begin{proof}
            \pf\ By \stepref{base-a-1}, $Q_{rc}$ has some message from a correct process that prepared for round $r$ and value $v$. It follows that $\exists$\roundchange{\lambda_i}{r+1}{pr_j}{pv_j} $\in Q_{rc}:$ $pr_j \neq \bot$ $\land$ $pr_j \neq \bot$.
          \end{proof}
        \end{proof}
      }
      
      \step{base-b}{
        \prove{Condition (b) is false.}
        \vspace{0.5ex}
        For the proof argument, we assume that condition (b) is true and show that it leads to a contradiction.

        \begin{proof}
          \step{base-b-1}{
            Any quorum $Q_{rc}$ includes some message from a correct process that prepared for round $r$ and value $v$.
          }
          \begin{proof}
            \pf\ A quorum $Q_{rc}$ by definition has \quorum messages. Since, by assumption, $f+1$ correct processes prepared for round $r$ and value $v$, it follows that $Q_{rc}$ must include some message from those $f+1$ processes.
          \end{proof}

          \step{base-b-2}{$r' = r$}
          \begin{proof}
            \pf\ Since we assume that condition 2 is true, $r'$ must be the highest prepared round value $pr_j$ in a \texttt{ROUND-CHANGE} message in $Q_{rc}$. By \stepref{base-b-1}, some \texttt{ROUND-CHANGE} message in $Q_{rc}$ has prepared round $r$, which is also the highest prepared round value possible for the message to be valid. It follows that $r' = r$.
          \end{proof}
			
          \step{base-b-3}{
            $f+1$ correct processes broadcasted a \prepare{\lambda}{r}{v} message.
          }
          \begin{proof}
            \pf\ This follows from the assumption that $f+1$ correct processes prepared for round $r$ and value $v$.
          \end{proof}

          \step{base-b-4}{
            For any value $v' \neq v$, at most \belowquorum valid \prepare{h}{r'}{v'} messages were broadcasted.
          }
          \begin{proof}
            \pf\ This follows from \stepref{base-b-2} and \stepref{base-b-3}.
          \end{proof}

          \qedstep
          \begin{proof}
            \pf\ The assumption in condition 2 that there are \quorum valid \prepare{\lambda}{r'}{v'} messages is in contradiction with assertion \stepref{base-b-4}.
          \end{proof}
        \end{proof}        
      }
    \end{proof}
  }
  
  \step{induction}{
    \textproc{Induction Step:} $r' > r+1$

    \vspace{1.0ex}    
    We now reason about the induction step. We prove that, for any \preprepare{\lambda}{r'}{v'} message $m$ such that $r' > r+1$ and $v' \neq v$, \textproc{JustifyPreprepare($m$)} is false. For this, we use the induction assumption, which is, for any \preprepare{\lambda}{r''}{v'} message $m'$ such that $r+1 \leq r'' < r'$ and $v' \neq v$, \textproc{JustifyPreprepare($m'$)} is false.
    
    \begin{proof}
      \step{induction-a}{
        \prove{Condition (a) is false.}
      }
      \begin{proof}
        \pf\ The same argument as 1.1 applies here.          
      \end{proof}
      
      \step{induction-b}{
        \prove{Condition (b) is false.}

        \step{induction-b-1}{
          If $(pr, v') =$ \textproc{HighestPrepared($Q_{rc}$)}, then $pr \geq r$. Additionally, $pr < r'$.
        }
        \begin{proof}
          \pf\ By assumption, $f+1$ correct processes prepared for value $v$ and round $r$. This means that any quorum $Q_{rc}$ has some \texttt{ROUND-CHANGE} message with prepared round $p_j$ equal or higher than $r$. As such, for $pr$ to be the highest prepared round value of any message in $Q_{rc}$, we must have $pr \geq r$. The assertion $pr < r'$ follows from the definition of a valid \texttt{ROUND-CHANGE} message.
        \end{proof}  

        \step{induction-b-2}{
          There is no quorum of \prepare{\lambda}{r}{v'} messages such that $v' \neq v$.
        }
        \begin{proof}
          \pf\ This follows from Lemma 1.
        \end{proof}

        \step{induction-b-3}{
          There is no quorum of \prepare{\lambda}{r''}{v'} messages such that $r+1 \leq r'' < r'$ and $v' \neq v$.
        }
        \begin{proof}
          \pf\ A correct process only broadcasts a \prepare{\lambda}{r''}{v'} if it accepts a \preprepare{\lambda}{r''}{v'} message. By assumption, there is no such \texttt{PRE-PREPARE} message. It follows that it is not possible have a quorum of \prepare{\lambda}{r''}{v'} messages
        \end{proof}

        \qedstep
        \begin{proof}
          \pf\ In \stepref{induction-b-1} we established that $r \leq pr < r'$, and in \stepref{induction-b-2} and \stepref{induction-b-3} we established that there is no quorum of \texttt{PREPARE} messages with value $v' \neq v$ for any round from $r$ to $r'-1$.
        \end{proof}
      }
    \end{proof}
  }
\end{proof}


\paragraph{Agreement Property.} We now use the previous two lemmas to prove the agreement property of consensus in Theorem \ref{theorem:agreement}.

\begin{theorem}
\label{theorem:agreement}
  (Agreement)
  If a correct process $p_i$ decides some value $v$, then no correct process $p_j$ decides a value $v'$ such that $v' \neq v$.
\end{theorem}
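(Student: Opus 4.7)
The plan is to reduce the agreement theorem to the two lemmas by a short case analysis on the rounds at which $p_i$ and $p_j$ decide. Suppose $p_i$ decides $v$ at round $r$ and $p_j$ decides $v'$ at round $r'$, and without loss of generality assume $r \leq r'$. The first step I would carry out is to show that the decision of $p_i$ forces at least $f+1$ \emph{correct} processes to have prepared for $v$ at round $r$. This uses a standard quorum-intersection argument: $p_i$'s quorum $Q_{commit}$ of \texttt{COMMIT} messages has size $\lfloor(n+f)/2\rfloor+1$, of which at most $f$ come from faulty processes, so at least $\lfloor(n-f)/2\rfloor+1 \geq f+1$ come from correct processes, each of which only broadcasts \texttt{COMMIT} after having prepared for $(r,v)$.

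Next I would split on the two cases. If $r = r'$, the same argument applied to $p_j$ yields $f+1$ correct processes that prepared for $(r,v')$; applying Lemma~\ref{lemma:prepare-same-round} forces $v' = v$, a contradiction. If $r < r'$, I would argue that $p_j$'s decision of $v'$ at round $r'$ requires $p_j$ to have prepared for $v'$ at round $r'$, which in turn requires $p_j$ to have accepted a \preprepare{\lambda}{r'}{v'} message $m$ with \textproc{JustifyPrePrepare}$(m)$ true. But since $f+1$ correct processes prepared for $(r,v)$ with $r<r'$, Lemma~\ref{lemma:no-preprepare-following-rounds} implies \textproc{JustifyPrePrepare}$(m)$ is false whenever $v' \neq v$, yielding the contradiction.

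The main obstacle is making the first step precise: one has to be careful that deciding $v'$ at round $r'$ really does entail the existence of a correctly justified \texttt{PRE-PREPARE} for $(r',v')$. A faulty $p_j$ is out of scope (the claim is about correct processes), and a correct $p_j$ only broadcasts \texttt{COMMIT} for $(r',v')$ after receiving a quorum of \texttt{PREPARE} messages for $(r',v')$, of which at least one comes from a correct process, and that correct process only sent its \texttt{PREPARE} after receiving and validating a \texttt{PRE-PREPARE} whose \textproc{JustifyPrePrepare} predicate holds. Once this chain is spelled out, Lemma~\ref{lemma:no-preprepare-following-rounds} closes the case $r < r'$ and Lemma~\ref{lemma:prepare-same-round} closes the case $r = r'$, completing the proof.
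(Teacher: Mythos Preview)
Your proposal is correct and follows essentially the same approach as the paper: a case split on rounds, invoking Lemma~\ref{lemma:prepare-same-round} for the same-round case and Lemma~\ref{lemma:no-preprepare-following-rounds} for the cross-round case. You are in fact more explicit than the paper in discharging the hypothesis of Lemma~\ref{lemma:no-preprepare-following-rounds} (the quorum-intersection argument that a decision at round $r$ yields $f{+}1$ correct processes prepared at $(r,v)$) and in tracing the chain from $p_j$'s decision back to a justified \texttt{PRE-PREPARE}; the paper leaves both of these implicit. One small wording slip: deciding $v'$ at $r'$ does not require that $p_j$ \emph{itself} prepared---only that some correct sender of a \texttt{COMMIT} in $p_j$'s quorum did---but you correct this in your final paragraph, so the argument goes through.
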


\pf\
Let us assume without loss of generality that some correct process $p_i$ is the first to decide, and decides value $v$ during some round $r$. From Lemma \ref{lemma:prepare-same-round} we can deduce that no correct process can decide $v'$ during $r$ because no correct process prepares for $v'$ during $r$.

For any round $r'$ such that $r' > r$, it follows from Lemma \ref{lemma:no-preprepare-following-rounds} that no correct process can decide $v'$ during $r'$ because there is no \preprepare{\lambda}{r'}{v'} message $m$ such that \textproc{JustifyPreprepare}($m$) is true. Therefore, no \preprepare{\lambda}{r'}{v'} message is accepted by any correct process and the algorithm does not make progress towards a decision on $v'$.

\subsection{Validity}

\begin{theorem}
	(Validity) Given an externally provided predicate $\beta$, if a correct process decides some value $v$, then $\beta(v)$ is true.	
\end{theorem}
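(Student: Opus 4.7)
The proof should be almost immediate from the algorithm's definition of message validity, so my plan is to trace the decision event back to the messages that triggered it, and observe that each such message must carry a value satisfying $\beta$. First I would invoke the normal-case \textproc{Decide} rule (Algorithm~\ref{alg:ibft-normal}, line~\ref{upon:normal-commit}) to note that a correct process $p_i$ decides a value $v$ only after receiving a quorum $Q_{commit}$ of \emph{valid} \commit{\lambda_i}{r}{v} messages, all carrying the same $v$.

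Next I would appeal to the ``Validation'' paragraph at the end of Section~\ref{section:algorithm}, which specifies that a correct process accepts a message only if it is valid, and for validity the externally provided predicate $\beta$ must hold on the value carried by the message. Applying this to any single one of the \quorum messages in $Q_{commit}$ (picking, say, one sent by a correct process, though any suffices since validity is a local acceptance check applied by $p_i$), we conclude that $\beta(v)$ evaluates to true. Since $v$ is precisely the value $p_i$ passes to \textproc{Decide}, the validity property holds.

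The only subtlety I anticipate is justifying why \emph{every} decision, including those reached by processes that catch up via the synchronization rule at line~\ref{upon:rc-synchronization}, is covered by the same argument. This is handled by observing that the synchronization rule merely forwards a $Q_{commit}$ set to another process, and that the receiving process still applies its local validity checks before the upon rule at line~\ref{upon:normal-commit} fires; in particular, the $\beta$ check is part of message validation and is not bypassed. Thus there is no separate decision path to analyze, and the proof reduces to the single observation above. I expect the argument to be one short structured proof with at most two steps: (i) a correct process decides $v$ only upon receipt of a quorum of valid \texttt{COMMIT} messages with value $v$, and (ii) validity of any such message implies $\beta(v)$.
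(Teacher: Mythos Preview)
Your proposal is correct and follows the same argument as the paper: a correct process decides $v$ only upon receiving a quorum of valid \texttt{COMMIT} messages, and message validity entails $\beta(v)$. The paper's own proof is a two-sentence version of exactly this; your extra remark about the synchronization rule at line~\ref{upon:rc-synchronization} is a harmless clarification the paper omits.
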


\pf\ To decide, a correct process must receive a quorum of valid \commit{\lambda}{r}{v} messages. A message is only considered valid if it carries a proposal value $v$ such that $\beta(v)$ is true. As such, if a correct process decides $v$, then $\beta(v)$ must be true.


\subsection{Termination}

\begin{theorem}
	(Termination) Every correct process decides.
\end{theorem}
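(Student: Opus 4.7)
The proof would proceed by case analysis on whether any correct process ever calls \textproc{Decide}. In the first case, some correct process $p_i$ has decided a value $v$ via a quorum $Q_{commit}$ of valid \texttt{COMMIT} messages. Any other correct process $p_j$ that has not yet decided will, within its current round, eventually have its timer expire and broadcast a \roundchange{\lambda}{r_j}{pr_j}{pv_j} message (and will keep broadcasting \texttt{ROUND-CHANGE} messages in successive rounds). After GST this message reaches $p_i$, which by the upon rule at line~\ref{upon:rc-synchronization} forwards $Q_{commit}$ to $p_j$. Upon receiving $Q_{commit}$, $p_j$ has a quorum of valid \texttt{COMMIT} messages and the upon rule at line~\ref{upon:normal-commit} fires, so $p_j$ decides.

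The harder case is when no correct process ever decides; I would derive a contradiction by exhibiting a round $r^\ast$ in which every correct process decides. The key ingredients are: (i) the exponential timer $t(r)$ eventually exceeds, say, $8\Delta$, which is enough to accommodate the full PRE-PREPARE / PREPARE / COMMIT cycle plus RC dissemination; (ii) the $f+1$ \texttt{ROUND-CHANGE} fast-forward rule at line~\ref{upon:rc-set-f}, which prevents any correct process from lagging indefinitely, since once $f+1$ correct processes reach some round $r$, any slower correct process will receive at least $f+1$ such messages (because the $f+1$ RC set can include at most one faulty process under $n\geq 3f+1$) and will jump forward; and (iii) the leader rotation property guaranteeing that infinitely many rounds have a correct leader. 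Combining (i)--(iii), I would argue that there is a round $r^\ast$ that (a) starts after GST, (b) has $t(r^\ast) > 8\Delta$, (c) has a correct leader $p_\ell$, and (d) is entered by every correct process within $\Delta$ of each other — the last point follows because the slowest correct process is pulled up by the $f+1$ RC rule once the faster ones have sent their RC messages for $r^\ast$.

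In this good round $r^\ast$, each correct process entering the round broadcasts a \texttt{ROUND-CHANGE} carrying its $(pr_i, pv_i)$; within $\Delta$, the correct leader $p_\ell$ collects a quorum $Q_{rc}$ of valid \texttt{ROUND-CHANGE} messages and can construct a \textproc{JustifyPrePrepare}-satisfying proposal: either all messages in $Q_{rc}$ carry $\bot$ (J1), and $p_\ell$ proposes $inputValue_\ell$; or some carry non-$\bot$ prepared values, and $p_\ell$ proposes the $pv$ of the highest prepared round, using the corresponding \texttt{PREPARE} quorum piggybacked on that RC message as justification (J2). The resulting \texttt{PRE-PREPARE} reaches all correct processes within $\Delta$ and triggers the rule at line~\ref{upon:normal-preprepare}; then within a further $2\Delta$ every correct process assembles a quorum of \texttt{PREPARE} messages and broadcasts \texttt{COMMIT}; then within another $\Delta$ every correct process assembles a quorum of \texttt{COMMIT} messages and decides — all strictly before $timer_i$ expires.

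The main obstacle, and the step I would spend the most effort on, is the synchronization argument for (d): showing that despite processes drifting through rounds at different rates driven by timer expirations, the combination of the exponentially growing timer and the $f+1$ fast-forward rule forces all correct processes into a common round $r^\ast$ and keeps them there long enough for the normal-case protocol to complete. In particular, I would need to rule out a pathological scenario where the fast-forward rule itself causes correct processes to overshoot past every potential good round; this is handled by noting that a correct process only fast-forwards to the \emph{smallest} round in its received $F_{rc}$, so it cannot skip over a round in which $f+1$ correct processes have stabilized.
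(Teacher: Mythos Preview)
Your proposal follows essentially the same two-case structure as the paper: first, if some correct process has already decided, the \texttt{ROUND-CHANGE}/$Q_{commit}$ forwarding rule (line~\ref{upon:rc-synchronization}) lets every other correct process catch up; second, if no correct process has decided, eventually a post-GST round with a correct leader is reached by all correct processes and the normal-case pipeline completes. The paper's own proof is considerably terser than yours---it asserts without elaboration that ``every correct process eventually reaches round $r$'' and that ``the algorithm will follow through with its normal case operation''---whereas you spell out the synchronization argument via the exponential timer $t(r)$ and the $f{+}1$ fast-forward rule, which is exactly the mechanism the paper relies on implicitly. One minor slip: your parenthetical ``the $f{+}1$ RC set can include at most one faulty process'' is not correct (it can include up to $f$ faulty processes); the right justification is simply that if $f{+}1$ \emph{correct} processes have broadcast \texttt{ROUND-CHANGE} for rounds $>r_i$, those $f{+}1$ messages alone suffice to trigger the fast-forward at any lagging correct process after GST.
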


\begin{proof}
  \beforePfSpace{1ex, 1ex}
  \afterPfSpace{1ex, 2ex}
  \interStepSpace{1ex}	
  \pflongnumbers

  For this proof, we assume, without loss of generality, that a correct process $p_i$ has not decided yet and demonstrate that it eventually decides.
  
  \vspace{2.0ex}
  \step{1}{
    $p_i$ must eventually reach some round $r$ (i.e., it sets $r_i = r$ and broadcasts a \roundchange{\lambda}{r}{_-}{_-} message) after GST with a correct leader $p_L$.
  }
  \begin{proof}
    \pf\ Since by assumption $p_i$ does not decide, its round timer must keep expiring indefinitely until it reaches round $r$ or it receives $f+1$ \texttt{ROUND-CHANGE} messages where $r$ is the highest round.
  \end{proof}

We now have two cases to consider. One where some correct process $p_j$ has already decided some value $v$, and one where no correct process has decided yet.
\vspace{2ex}

  \step{2}{
    \case{Some correct process $p_j$ has decided.}
    \begin{proof}
      \step{2.1}{
        $p_j$ receives the \roundchange{\lambda}{r}{_-}{_-} message broadcast by $p_i$ and sends to $p_i$ a quorum of valid COMMIT messages for the same value $v$.
      }
      \begin{proof}
        \pf\ By \stepref{1}, since after GST all messages sent by correct processes are timely delivered.
      \end{proof}
      
      \qedstep
      \begin{proof}
        \pf\ After GST, $p_i$ must receive the quorum of valid COMMIT messages for value $v$ sent $p_j$ and decide.
      \end{proof}
    \end{proof}

  }

  \step{3}{
    \case{No correct process has decided.}
    \begin{proof}
      \step{3.1}{
        Every correct process eventually reaches round $r$ and broadcasts a valid \roundchange{\lambda}{r}{_-}{_-} message.
      }
      \begin{proof}
        \pf\ The argument is the same as in step \stepref{1}, but generalized for all correct processes.
      \end{proof}
            
      \step{3.2}{
        The correct leader $p_L$ broadcasts a valid and justified \preprepare{\lambda}{r}{v} message.
      }
      \begin{proof}
        \pf\ After GST, $p_L$ must receive every \roundchange{\lambda}{r}{_-}{_-} message broadcast by a correct process. Since those messages are piggybacked with the necessary \texttt{PREPARE} messages for justification, $p_L$ is able to construct a valid and justified \preprepare{\lambda}{r}{v} message.
      \end{proof}
      
      \qedstep
      \begin{proof}
        \pf\ Since it is after GST, the algorithm will follow through with its normal case operation and $p_i$ will decide $v$ at the end of round $r$.
      \end{proof}
    \end{proof}
  }
\end{proof}


\section{Conclusion}
\label{section:conclusion}
In this paper we proposed IBFT, a simple algorithm for the consensus problem in Byzantine-fault tolerant systems that is implemented in the Quorum blockchain. IBFT assumes a partially synchronous model. With timely communication, IBFT terminates within three message delays and has $O(n^2)$ communication complexity during both normal case operation and round changes. To ensure safety across round changes, IBFT relies on a  justification mechanism of proposed values that is critical in achieving quadratic communication complexity.


\section*{Acknowledgements}
We would like to thank the following people for reviewing and  providing thoughtful feedback on earlier versions of this manuscript: Alysson Bessani and Rodrigo Rodrigues at the University of Lisbon, Jitendra Bhurat and Samer Falah at J.P. Morgan, and Roberto Saltini at ConsenSys.

\bibliography{hmz}
\bibliographystyle{abbrv}



\end{document}